\newtheorem{claim}{}[section]
\newtheorem{theorem}[claim]{Theorem}
\newtheorem{lemma}[claim]{Lemma}
\newtheorem{proposition}[claim]{Proposition}
\renewenvironment{proof}{\noindent{\it Proof. \hskip0pt}}
                      {$\square$\par\medskip}
\begin{document}
\baselineskip 5.6 truemm
\parindent 1.5 true pc

\newcommand\lan{\langle}
\newcommand\ran{\rangle}
\newcommand\tr{{\text{\rm Tr}}\,}
\newcommand\ot{\otimes}
\newcommand\ol{\overline}
\newcommand\join{\vee}
\newcommand\meet{\wedge}
\renewcommand\ker{{\text{\rm Ker}}\,}
\newcommand\image{{\text{\rm Im}}\,}
\newcommand\id{{\text{\rm id}}}
\newcommand\tp{{\text{\rm tp}}}
\newcommand\pr{\prime}
\newcommand\e{\epsilon}
\newcommand\la{\lambda}
\newcommand\inte{{\text{\rm int}}\,}
\newcommand\ttt{{\text{\rm t}}}
\newcommand\spa{{\text{\rm span}}\,}
\newcommand\conv{{\text{\rm conv}}\,}
\newcommand\rank{\ {\text{\rm rank of}}\ }
\newcommand\re{{\text{\rm Re}}\,}
\newcommand\ppt{\mathbb T}
\newcommand\rk{{\text{\rm rank}}\,}
\newcommand\bcolor{\color{blue}}
\newcommand\ecolor{\color{black}}
\newcommand\sss{\omega}

\title{Faces for two qubit separable states and the convex hulls of trigonometric moment curves}

\author{Seung-Hyeok Kye}
\address{Department of Mathematics and Institute of Mathematics\\Seoul National University\\Seoul 151-742, Korea}

\date\today

\thanks{partially supported by NRFK 2012-0000939}

\subjclass{81P16, 52B11, 15A30, 46L30}

\keywords{separable state, entanglement, trigonometric moment curve, face, extreme point, }

\begin{abstract}
We analyze the facial structures of the convex set consisting of all two
qubit separable states. One of faces is a four dimensional convex
body generated by the trigonometric moment curve arising from
polyhedral combinatorics. Another one is an eight dimensional convex
body, which is the convex hull of a homeomorphic image of the two
dimensional sphere. Extreme points consist of points on the surface,
and any two of them determines an edge. We also reconstruct the trigonometric moment curve
in any even dimensional affine space using the qubit-qudit systems, and characterize
the facial structures of the convex hull.
\end{abstract}

\maketitle

\section{Introduction}\label{sec:intro}

Let $M_n$ denote the $C^*$-algebra of all $n\times n$ matrices over the complex field.
A state on $M_n$ is a unital positive linear functional on $M_n$, and is represented by a density matrix which is
a positive semi-definite matrix with trace one. A state
on the tensor product $M_m\ot M_n=M_{mn}$ is said to be separable if it is the convex sum of
rank one projections onto simple tensors of the form $x\ot y\in\mathbb C^m\ot\mathbb C^n$, which are called product vectors.
A state which is not separable is said to be entangled. The notion of entanglement arising from quantum mechanics turns out to be very useful
in the current quantum information and quantum communication theory.

The convex structures for the convex set $\mathbb S_{m\times n}$ of
all separable states on $M_m\ot M_n$ are highly nontrivial, and have
begun to be studied very recently. See
\cite{alfsen,alfsen_2,chen_dj_semialg,choi_kye,ha+kye_unique_decom,ha+kye_unique_choi_length}, for examples. For the
simplest case of $m=n=2$, all faces of the convex set $\mathbb
S_{2\times 2}$ have been classified in \cite{ha_kye_04}. The first purpose
of this note is to analyze convex sets arising as faces of $\mathbb
S_{2\times 2}$, and to report relations with polyhedral combinatorics.
The convex set $\mathbb S_{2\times 2}$
itself is a $15$-dimensional convex body whose extreme points are
parameterized by the $4$-dimensional manifold $\mathbb C\mathbb
P^1\times\mathbb C\mathbb P^1$. Two extreme points represented by
$(x,y)$ and $(z,w)$ on the manifold make an edge if and only if
$x\neq y$ and $z\neq w$. Otherwise, they determine a face of
$\mathbb S_{2\times 2}$ which is affinely isomorphic to the
$3$-dimensional solid sphere whose extreme points are parameterized
by just $\mathbb C\mathbb P^1$.

There are two kinds of maximal faces. One kind of them consists of $8$-dimensional convex sets whose extreme points are parameterized by
two dimensional sphere $\mathbb C\mathbb P^1$. The convex combination of any two extreme points is an edge.
If we take three extreme points, then the corresponding points on the sphere determine a circle.
It turns out that this circle determines a maximal face of the maximal face, and is of $4$-dimensional.
In this way, we have an embedding of the circle into the $4$-dimensional convex body $C^4$ with the property:
A point of $C^4$ is an extreme point if and only if it is on the image of the circle, and any two extreme points make an edge.
This embedding is exactly the trigonometric moment curve studied in polyhedral combinatorics.

The moment curve had been studied in \cite{gale} to realize four dimensional polytopes with the property that
convex combinations of any two extreme points are edges. It was shown that any finite distinct points on the moment curve
$$
(t,t^2,r^3,\dots,t^{2p})\in\mathbb R^{2p},\qquad t\in\mathbb R
$$
give rise to a convex body, called a cyclic polytope, with the
property that any choice of $p$ extreme points makes a face. The
trigonometric moment curve
\begin{equation}\label{tri-moment}
(\cos t,\sin t, \cos 2t,\sin 2t,\dots, \cos pt,\sin pt)\in\mathbb R^{2p},\qquad t\in\mathbb [0,2\pi)
\end{equation}
has the same property. The convex hull, denoted by $C^{2p}$, of the whole trigonometric moment curve has been studied in \cite{smil}
for the case of $p=2$. Extreme points of $C^4$ are those on the curve, and the convex hull of any two extreme points
is an edge. There are no more nontrivial faces of $C^4$. The convex hull of the moment curve has been studied in \cite{puente}.


The second purpose of this note is to investigate the convex structures for the convex hull $C^{2p}$
of trigonometric moment curve,
for arbitrary $p=3,4,\dots$. We show that $C^{2p}$ arises as a face
of the convex set $\mathbb S_{2\times p}$, and has the similar property as $C^4$.
It is possible to express any point in $C^{2p}$
by the convex combination of $k$ extreme points with $k\le p+1$, and
it is an interior point of $C^{2p}$ if and only if we need $p+1$ extreme points for the expression.
We can also characterize elements of the convex body $C^{2p}$ in $\mathbb C^p$
by a finite set of inequalities involving $p$ complex variables.
The convex hull of the symmetric trigonometric moment curve has been studied in \cite{{bar_2008},{vinzant}},
where only odd multiples of $t$ arise in (\ref{tri-moment}).

After we briefly review the classification of faces of the convex set $\mathbb S_{2\times 2}$ in the next section, we
consider the intersections of maximal faces in Section 3.
We examine $4$-dimensional faces of $\mathbb S_{2\times 2}$ in Section 3, to recover various moment
and trigonometric moment curves for the case of $p=2$.
In the final section, we construct the trigonometric moment curve for arbitrary $p=1,2,3,\dots$, which
generates a face of the convex set $\mathbb S_{2\times p}$. With this machinery, we characterize the whole facial structure
of the convex hull generated by the trigonometric moment curve.
Throughout this note, we will identify vectors of the vector space $\mathbb
C^2$ and points of $\mathbb C\mathbb P^1$. Two
points $x$ and $y$ of $\mathbb C\mathbb P^1$ coincide if and only if
they are parallel to each others, $x\parallel y$ in notation, as
vectors in $\mathbb C^2$.

The author is grateful to Lin Chen and Kil-Chan Ha for useful comments on the draft.

\section{Faces}\label{sec:sep}

For $\varrho=\varrho_1\ot\varrho_2\in M_m\ot M_n$, we define the partial transpose $\varrho^\Gamma$ by
$\varrho^\Gamma=\varrho_1^\ttt\ot\varrho_2$, where $\varrho_1^\ttt$ denotes the usual transpose of $\varrho_1$.
This operation can be extended on the whole $M_m\ot M_n$.
It was observed by Choi \cite{choi-ppt} and rediscovered by Peres \cite{peres} that if $\varrho$ is separable then $\varrho^\Gamma$
is still positive semi-definite. In other words, we have $\mathbb S_{m\times n}\subset \mathbb T_{m\times n}$, if
we denote by $\mathbb T_{m\times n}$ the set of all states $\varrho$ on $M_m\ot M_n$
with positive semi-definite partial transposes $\varrho^\Gamma$.
It was also shown in \cite{choi-ppt,horo-1,woronowicz} that $\mathbb S_{m\times n}= \mathbb T_{m\times n}$
if and only if $mn\le 6$. We exploit this result to investigate the structures of $\mathbb S_{2\times 2}=\mathbb T_{2\times 2}$.

It was shown in \cite{ha_kye_04} that every face of $\mathbb T_{m\times n}$ is determined by a pair $(D,E)$ of subspaces
in $\mathbb C^m\ot\mathbb C^n$, and is of the form
$$
\tau(D,E)=\{\varrho\in\mathbb T_{m\times n}: {\mathcal R}\varrho\subset D,\ {\mathcal R}\varrho^\Gamma\subset E\},
$$
where ${\mathcal R}\varrho$ denotes the range space of $\varrho$.
The pair $(D,E)$ is uniquely determined if we require
that $D$ and $E$ are minimal. It is very difficult in general to characterize
pairs $(D,E)$ for which $\tau(D,E)$ is nonempty. In the case of
$\mathbb S_{2\times 2}$, we have the complete list \cite{ha_kye_04}
of such pairs. We will identify a vector $(a,b,c,d)\in\mathbb C^2\ot\mathbb C^2$ with the $2\times 2$ matrix by
\begin{equation}\label{iissoo}
(a,b,c,d)\ \leftrightarrow\ \left(\begin{matrix}a&b\\c&d\end{matrix}\right).
\end{equation}
In this way, we can say about the rank of a vector in $\mathbb C^2\ot\mathbb C^2$.
A vector in $\mathbb C^2\ot\mathbb C^2$ is a product vector if and only if it is of rank one.
In the following, we list up all faces of the convex set $\mathbb S_{2\times 2}$, where
two natural numbers in the subscripts denote $\dim D$ and $\dim E$ for $\tau(D,E)$, respectively.
\begin{itemize}
\item
$G_{4,4}$: the full convex set $\mathbb S_{2\times 2}$.
\item
$G_{3,4}(V)=\tau(V^\perp, \{0\}^\perp)$, with a rank two vector $V\in\mathbb C^2\ot\mathbb C^2$.
\item
$G_{4,3}(W)=\tau(\{0\}^\perp, W^\perp)$, with a rank two vector $W\in\mathbb C^2\ot\mathbb C^2$.
\item
$H_{3,3}(x,y)=\tau((x\ot y)^\perp,(\bar x\ot y)^\perp)$, with $x,y\in\mathbb C^2$.
\item
$G_{3,3}(V,W)=\tau(V^\perp, W^\perp)$, with rank two vectors $V,W\in\mathbb C^2\ot\mathbb C^2$ with a suitable relation,
which will be given in the next section.
\item
$H_{2,2}(x,y,z,w)=\tau(\spa\{x\ot y,z\ot w\},\spa\{\bar x\ot y,\bar z\ot w\})$, with $x,y,z,w\in\mathbb C^2$ satisfying
 $x\parallel z$ or $y\parallel w$.
\item
$G_{2,2}(x,y,z,w)=\tau(\spa\{x\ot y,z\ot w\},\spa\{\bar x\ot y,\bar z\ot w\})$, with $x,y,z,w\in\mathbb C^2$ satisfying
 $x\nparallel z$ and $y\nparallel w$.
\item
$G_{1,1}(x,y)=\tau(\spa\{x\ot y\},\spa\{\bar x\ot y\})$ with $x,y\in\mathbb C^2$.
\end{itemize}

We note that $G_{1,1}(x,y)$ is an extreme point of $\mathbb S_{2\times 2}$ which is nothing but the rank one projection $P_{x\ot y}$
onto a unit product vector $x\ot y$. The face $G_{2,2}(x,y,z,w)$ is an edge which is the convex hull of
two extreme points $G_{1,1}(x,y)$ and $G_{1,1}(z,w)$.
On the other hand, $G_{3,4}(V)$, $G_{4,3}(W)$ and $H_{3,3}(x,y)$ are maximal faces.

For $V\in\mathbb C^2\ot\mathbb C^2$, we note that $P_{x\ot y}\in G_{3,4}(V)$ if and only if
$x\ot y\perp V$ if and only if $x\perp V\bar y$. Here, we consider $V$ as a $2\times 2$ matrix by (\ref{iissoo}),
and $\bar y$ denotes the vector whose entries are given by the complex conjugates of the corresponding entries.
We also note that
$x\perp V\bar y$ if and only if $(VW^{-1})^*x\perp W\bar y$. Therefore, the affine isomorphism
$$
\varrho\mapsto ((VW^{-1})^*\ot I_2)\varrho ((VW^{-1})\ot I_2)
$$
sends the face $G_{3,4}(V)$ onto the face $G_{3,4}(W)$, where $I_2=(1,0,0,1)\in\mathbb C^2\ot\mathbb C^2$.
The operation of partial transpose sends $G_{3,4}(V)$ onto $G_{4,3}(V)$. Therefore, all the faces of types $G_{3,4}$ and $G_{4,3}$ are
affinely isomorphic to each others. For given $x,y,z,w\in\mathbb C^2$, take nonsingular matrices $V$ and $W$ with
$Vz=x$ and $Ww=y$. Then we see that the affine isomorphism
$$
\varrho\mapsto (V^*\ot W^*)\varrho(V\ot W)
$$
maps the face $H_{3,3}(x,y)$ onto the face $H_{3,3}(z,w)$.

Now, we find extreme points for a given maximal face. We first note that all extreme points are parameterized by the
$4$-dimensional manifold $\mathbb C\mathbb P^1\times\mathbb C\mathbb P^1$. We see that an extreme point $G_{1,1}(z,w)$ belongs
to the face $H_{3,3}(x,y)$ if and only if $x\perp z$ or $y\perp w$. Therefore, extreme points of the face $H_{3,3}(x,y)$
are parameterized by the union of two $\mathbb C\mathbb P^1$'s in $\mathbb C\mathbb P^1\times\mathbb C\mathbb P^1$.
On the other hand, we see that extreme points of the face $G_{3,4}(V)$ are parameterized by $\mathbb C\mathbb P^1$. Indeed,
for each $z\in\mathbb C\mathbb P^1$, there is a unique
$w\in\mathbb C\mathbb P^1$ such that $z\perp V\bar w$, and so $P_{z\ot w}\in G_{3,4}(V)$.

\newcommand\cii{\circle*{0.3}}
\begin{center}
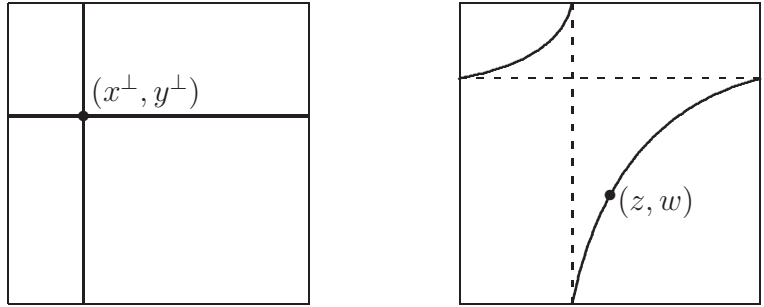
\begin{figure}[h]\label{class}
\setlength{\unitlength}{.5 truecm}
\begin{picture}(20,9)
\put(0,0){\line(1,0){8}} \put(0,0){\line(0,1){8}}
\put(0,8){\line(1,0){8}} \put(8,0){\line(0,1){8}}
\put(2.2,5.4){$(x^\perp, y^\perp)$}
\put(2,5){\cii}

\put(16.2,2.5){$(z,w)$}
\put(16,2.9){\cii}

\put(12,0){\line(1,0){8}} \put(12,0){\line(0,1){8}}
\put(12,8){\line(1,0){8}} \put(20,0){\line(0,1){8}}
\dashline{0.2}(12,6)(20,6)
\dashline{0.2}(15,0)(15,8)

\thicklines
\put(0,5){\line(1,0){8}} \put(2,0){\line(0,1){8}}
\qbezier(12,6)(14.7,6.5)(15,8)
\qbezier(15,0)(16,5)(20,6)
\end{picture}
\caption{
Points of $\mathbb C\mathbb P^1\times \mathbb C\mathbb P^1$ whose associated extreme points belong to
the faces $H_{3,3}(x,y)$ and $G_{3,4}(V)$. Both horizontal and vertical axes represent $\mathbb C\mathbb P^1$.}
\end{figure}
\end{center}

\section{intersection of maximal faces}\label{sec:intersection}

In this section, we consider intersections of maximal faces. We recall \cite{ha_kye_04} that every face of $\mathbb T_{m,n}$ is exposed,
and so it is the intersection of a family of maximal faces.
It is clear that the intersection of any two maximal faces of the form $H_{3,3}$ has two or infinitely many extreme points.
The intersection $H_{3,3}(x,y)\cap H_{3,3}(z,w)$ has infinitely many extreme points if and only if
$x=z$ or $y=w$ as elements of $\mathbb C\mathbb P^1$.
In this case, we have $H_{3,3}(x,y)\cap H_{3,3}(z,w)=H_{2,2}(x,y,z,w)$. This is affinely isomorphic to the convex set $\mathbb S_{1,2}$
which is realized as the $3$-dimensional solid sphere whose extreme points are parameterized by $\mathbb C\mathbb P^1$.
If $x\neq z$ or $y\neq w$ then the intersection is an edge.

It is also apparent that the intersection $H_{3,3}(x,y)\cap G_{3,4}(V)$ has one or two extreme points. It has a single extreme point
if and only if $(x^\perp, y^\perp)$ is an extreme point of $G_{3,4}(V)$. The intersection $H_{3,3}(x,y)\cap G_{4,3}(W)$ is similar.
As for the intersection $G_{3,4}(V)\cap G_{3,4}(W)$, we have the following:

\begin{proposition}
Let $V$ and $W$ be non-singular matrices which are different up to scalar multiplication.
Then $G_{3,4}(V)\cap G_{3,4}(W)$ is a single point or a line segment. It is a single point
if and only if $W^{-1}V=(a,b,c,d)$ satisfies the following two conditions:
\begin{enumerate}
\item[(i)]
$b\neq 0$ or $c\neq 0$
\item[(ii)]
$(a-d)^2+4bc=0$.
\end{enumerate}
\end{proposition}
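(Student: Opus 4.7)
The plan is to identify the extreme points of $G_{3,4}(V)\cap G_{3,4}(W)$ and read off the answer from the spectral structure of $M:=W^{-1}V$. Using the observation from Section~\ref{sec:sep} that $P_{z\otimes w}\in G_{3,4}(V)$ iff $z\perp V\bar w$, an extreme point $P_{z\otimes w}$ of $\mathbb S_{2\times 2}$ belongs to the intersection iff the nonzero vector $z$ is simultaneously orthogonal to $V\bar w$ and to $W\bar w$. Since $z^\perp\subset\mathbb C^2$ is one-dimensional and both $V\bar w,W\bar w$ are nonzero (as $V,W$ are nonsingular and $\bar w\neq 0$), this forces $V\bar w\parallel W\bar w$, equivalently $M\bar w=\lambda\bar w$ for some scalar $\lambda$. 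Given such an eigenvector $\bar w$, the partner $z$ is determined up to scalar as the line orthogonal to $V\bar w$, so the extreme points of $G_{3,4}(V)\cap G_{3,4}(W)$ are in bijection with the eigenlines of $M$ in $\mathbb{CP}^1$.

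Since $V$ and $W$ differ beyond scalar multiples, $M$ is not a scalar matrix, so it has either two distinct eigenlines or exactly one (the Jordan-block case). For $M=(a,b,c,d)$, the characteristic polynomial $\lambda^2-(a+d)\lambda+(ad-bc)$ has discriminant $(a-d)^2+4bc$, vanishing precisely under condition (ii). Under (ii), $M$ has a single eigenline iff $M$ is not a scalar multiple of $I_2$; if additionally (i) holds, then $M$ has an off-diagonal entry and hence is not scalar. Conversely, if (ii) holds but (i) fails, $b=c=0$ and $(a-d)^2=0$ jointly yield $M=aI_2$, contradicting the hypothesis. If (ii) fails, the two distinct eigenvalues furnish two eigenlines. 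Thus the number of extreme points is one precisely when (i) and (ii) both hold.

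It remains to upgrade this count to the claimed geometric dichotomy. The intersection $G_{3,4}(V)\cap G_{3,4}(W)$ is itself a face of $\mathbb S_{2\times 2}$, being the intersection of two exposed faces, and is nonempty since $M$ always has at least one eigenvector. Inspecting the classification of Section~\ref{sec:sep}, a face with exactly one extreme point is a $G_{1,1}$ (a singleton) and a face with exactly two extreme points must be of type $G_{2,2}(z_1,w_1,z_2,w_2)$ (a line segment). The latter possibility requires $z_1\nparallel z_2$ and $w_1\nparallel w_2$: the condition $w_1\nparallel w_2$ is immediate from the independence of the two eigenvectors $\bar w_1,\bar w_2$, while $z_1\parallel z_2$ would force $V\bar w_1\parallel V\bar w_2$ and, by nonsingularity of $V$, $\bar w_1\parallel\bar w_2$, a contradiction. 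I expect the only mildly delicate point to be the bookkeeping between $w$ and $\bar w$ forced by the partial transpose in the definition of $G_{3,4}(V)$; otherwise the proof reduces to a standard Jordan-form analysis of one $2\times 2$ matrix.
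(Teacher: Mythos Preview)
Your proof is correct and follows essentially the same approach as the paper: both reduce the problem to counting eigenlines of $M=W^{-1}V$ via the equivalence $P_{z\ot w}\in G_{3,4}(V)\cap G_{3,4}(W)\Leftrightarrow M\bar w\parallel\bar w$. The paper proceeds by explicit casework on which of $b,c$ vanish, while you invoke the discriminant and the fact that a non-scalar $2\times 2$ matrix with a repeated eigenvalue has exactly one eigenline; these are the same computation in different dress. Your final paragraph, checking that two extreme points yield a $G_{2,2}$ edge rather than an $H_{2,2}$ ball, is a detail the paper leaves implicit but is a worthwhile addition.
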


\begin{proof}
An extreme point $P_{x\ot y}$ belongs to
$G_{3,4}(V)\cap G_{3,4}(W)$ if and only if $x\perp V\bar y$ and $x\perp W\bar y$ if and only if
\begin{equation}\label{vvv}
W^{-1}V\bar y\parallel \bar y.
\end{equation}
We first consider the case $b=c=0$. In this case, we have $a\neq d$, and $\bar y=(1,0)$ and $\bar y=(0,1)$ are
always two solutions of (\ref{vvv}). Hence, the assertion is true.
If $b=0$ and $c\neq 0$ then $\bar y=(0,1)$ is always a solution of (\ref{vvv}). Furthermore, (\ref{vvv}) has a solution of the form
$\bar y=(1,\xi)$ if and only if such solution is unique if and only if $a-d\neq 0$. Therefore, we get the required conclusion.
The same reasoning may be applied for the case $b\neq 0$ and $c=0$. Finally, suppose that $bc\neq 0$. In this case, any solution
of (\ref{vvv}) must be of the form $\bar y=(1,\xi)$, and this is a solution if and only if $b\xi^2+(a-d)\xi-c=0$. From
this, we get the conclusion.
\end{proof}

We have the similar result for the intersection $G_{4,3}(V)\cap G_{4,3}(W)$, and it remains to consider the intersection
$G_{3,4}(V)\cap G_{4,3}(W)$.
We first note that the extreme point $P_{x\ot y}$ belongs to the intersection if and only if
\begin{equation}\label{ffff}
x\perp V\bar y,\qquad \bar x\perp W\bar y,
\end{equation}
because $(P_{x\ot y})^\Gamma=P_{\bar x\ot y}$. This is equivalent to $V\bar y\parallel \bar W y$ or
$\bar W^{-1}V\bar y\parallel y$. The matrix of the form in the statement (v) in the following lemma appears in
Proposition 3.6 of \cite{byeon-kye}, to classify faces of the convex cone of all positive linear maps from $M_2$ into $M_2$.

\begin{lemma}\label{lemma}
Let $A$ be a $2\times 2$ nonsingular matrix. Then the possible numbers of the solutions of
\begin{equation}\label{eqzz}
A\bar y\parallel y
\end{equation}
in $\mathbb C\mathbb P^1$ are $0,1,2,\infty$, and the following are equivalent:
\begin{enumerate}
\item[(i)]
The equation {\rm (\ref{eqzz})} has infinitely many solutions in $\mathbb C\mathbb P^1$.
\item[(ii)]
The set of solutions of {\rm (\ref{eqzz})} in $\mathbb C\mathbb P^1$ is a circle.
\item[(iii)]
$A$ is of the form $BJ\bar B^{-1}$ for a nonsingular $B$, where $J=\left(\begin{matrix}0&1\\1&0\end{matrix}\right)$.
\item[(iv)]
$A$ is of the form $C\bar C^{-1}$ for a nonsingular $C$.
\item[(v)]
$A$ is of the form $\left(\begin{matrix}\alpha &r\\s&-\bar\alpha\end{matrix}\right)$, where $\alpha\in\mathbb C$
and $r,s\in\mathbb R$  with $|\alpha|^2+rs>0$.
\end{enumerate}
Furthermore, for a given circle on the sphere $\mathbb C\mathbb P^1$, there is a unique nonsingular matrix $A$
up to scalar multiplications such that
the solution of the equation {\rm (\ref{eqzz})} consists of the circle.
\end{lemma}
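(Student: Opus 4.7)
The plan is to analyze the equation (\ref{eqzz}) in explicit coordinates on $\mathbb C\mathbb P^1$ and then work through the five equivalent conditions one at a time. A subtlety to bear in mind throughout is that (i) and (ii) depend only on the scalar-equivalence class of $A$, whereas (iii) and (iv) literally force $A\bar A=I$ and (v) allows $A\bar A$ to be any positive real multiple of $I$; so the stated equivalences are naturally interpreted up to a positive real rescaling of $A$, and this is also what makes the final uniqueness clause meaningful.

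First, write $A=\left(\begin{smallmatrix}a&b\\c&d\end{smallmatrix}\right)$ and use the affine coordinate $y=(1,\zeta)$ on $\mathbb C\mathbb P^1$, handling $y=(0,1)$ as a separate check. The condition $A\bar y\parallel y$ becomes the single complex equation
$$b|\zeta|^2+a\zeta-d\bar\zeta-c=0.$$
Splitting this into real and imaginary parts gives two real equations of the generic form $\lambda(u^2+v^2)+\beta u+\gamma v+\delta=0$ in $(u,v)=(\mathrm{Re}\,\zeta,\mathrm{Im}\,\zeta)$, each of which is empty, a point, a line, or a circle. Two distinct such loci meet in at most two points in $\mathbb R^2$, and when they coincide the common set is a whole circle on $\mathbb C\mathbb P^1$ via inverse stereographic projection (which converts lines to circles through $\infty$). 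This establishes the count $\{0,1,2,\infty\}$ and immediately gives (i) $\Leftrightarrow$ (ii).

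Next I would handle (i)/(ii) $\Leftrightarrow$ (v). Infinitely many solutions is equivalent to the above complex equation being a constant multiple of its own complex conjugate; matching coefficients in $1,\zeta,\bar\zeta,|\zeta|^2$ forces the constant to be a unit, and after rescaling $A$ by a suitable unit scalar one obtains $b,c\in\mathbb R$ and $d=-\bar a$, which is form (v) with $\alpha=a$, $r=b$, $s=c$. Completing the square in the resulting real equation $r(u^2+v^2)+2\mathrm{Re}(\alpha\zeta)-s=0$ shows the solution locus has radius-squared proportional to $|\alpha|^2+rs$, so the strict inequality $|\alpha|^2+rs>0$ is exactly the condition for the locus to be a nonempty nondegenerate circle; nonsingularity is automatic from $\det A=-|\alpha|^2-rs\neq 0$.

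For (iii) $\Leftrightarrow$ (iv), the matrix $X=\left(\begin{smallmatrix}1&1\\i&-i\end{smallmatrix}\right)$ satisfies $XJ=\bar X$, hence $XJ\bar X^{-1}=I$, so $B\leftrightarrow C:=BX$ interconverts the two representations. For (iii) $\Rightarrow$ (ii), substituting $w=\bar B^{-1}\bar y$ reduces (\ref{eqzz}) to $Jw\parallel\bar w$, i.e.\ $|w_1|=|w_2|$, the equator of $\mathbb C\mathbb P^1$; this maps back to a circle in $y$-coordinates under the M\"obius-type map $w\mapsto B\bar w$. Analogously (iv) $\Rightarrow$ (ii) reduces to $w\parallel\bar w$. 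For the converse (v) $\Rightarrow$ (iv), a direct multiplication yields $A\bar A=(|\alpha|^2+rs)I$; after rescaling so that $A\bar A=I$, the antilinear map $T\colon v\mapsto A\bar v$ on $\mathbb C^2$ is a real involution, so $\mathbb C^2=V_+\oplus V_-$ as real spaces with multiplication by $i$ giving an $\mathbb R$-linear bijection $V_+\to V_-$; thus $V_+=\{v:A\bar v=v\}$ has real dimension $2$. Any two $\mathbb R$-independent elements of $V_+$ are automatically $\mathbb C$-independent (a nonreal relation $v_2=\lambda v_1$ with $v_1,v_2\in V_+$ would force $\bar\lambda=\lambda$), so assembling them as columns of $C$ yields $A\bar C=C$. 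The uniqueness clause then follows by reversing coefficient matching: a circle $\Gamma\subset\mathbb C\mathbb P^1$ determines its real defining equation up to a real scalar, and hence determines $A$ up to scalar. I expect the main obstacle to be this Hilbert 90-style step (v) $\Rightarrow$ (iv), where one must actually produce $C$ from $A$; everything else reduces to coordinate manipulation and the elementary geometry of circles in the plane.
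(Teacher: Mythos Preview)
Your argument is essentially correct and complete, but it follows a genuinely different route from the paper's proof. Two remarks:

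\emph{Minor slip.} In your (iii) $\Leftrightarrow$ (iv) step, the identity $XJ\bar X^{-1}=I$ is equivalent to $J=X^{-1}\bar X$, which gives $BJ\bar B^{-1}=(BX^{-1})\overline{(BX^{-1})}^{-1}$; so the correct substitution is $C=BX^{-1}$, not $C=BX$. (Equivalently, replace your $X$ by its transpose, which does satisfy $Y\bar Y^{-1}=J$.) This is a one-symbol fix and does not affect the strategy.

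\emph{Comparison with the paper.} The paper proves the cycle (i) $\Rightarrow$ (ii) $\Rightarrow$ (iii), (ii) $\Rightarrow$ (iv), (iii)/(iv) $\Rightarrow$ (v), (v) $\Rightarrow$ (i). The crucial steps (ii) $\Rightarrow$ (iii) and (ii) $\Rightarrow$ (iv) are handled geometrically: the solution circle is parametrized by a M\"obius transformation $\omega\mapsto B(\omega,1)^{\mathrm t}$ of the unit circle (resp.\ $t\mapsto C(t,1)^{\mathrm t}$ of the real line), and substituting into (\ref{eqzz}) forces $B^{-1}A\bar B=J$ (resp.\ $C^{-1}A\bar C=I$) up to scalar. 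Uniqueness is obtained by showing that three distinct points $(\xi_i,1)$ on the circle give linearly independent vectors $(|\xi_i|^2,\xi_i,-\bar\xi_i,-1)$, which pins down the coefficient row of $A$. Your route is more algebraic: you get (ii) $\Leftrightarrow$ (v) by matching the equation with its conjugate, link (iii) and (iv) by an explicit conjugation, and then produce $C$ from (v) by the Hilbert~90--type observation that $A\bar A=(|\alpha|^2+rs)I$ makes $v\mapsto A\bar v$ a real involution with a two-dimensional fixed space. The paper's approach is more constructive---it exhibits the M\"obius parametrization of the circle directly and feeds naturally into the later sections where these parametrizations are used to build the moment curve---whereas your approach is cleaner structurally and avoids any explicit parametrization of the circle. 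Both are valid; the paper's choice is motivated by downstream use rather than by necessity for the lemma itself.
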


\begin{proof}
We denote by $a_{ij}$ the $ij$-entry of $A$. We first note that the equation (\ref{eqzz}) with $y=(\xi,1)^\ttt$ becomes
\begin{equation}\label{eqxxx}
a_{21}|\xi|^2+a_{22}\xi-a_{11}\bar\xi-a_{12}=0.
\end{equation}
We also note that $y=(1,0)^\ttt$ is a solution of (\ref{eqzz}) if and only if $a_{21}=0$. In this case, both real and imaginary parts
of (\ref{eqzz}) with $y=(\xi,1)^\ttt$ represent lines on the complex planes. If $a_{21}\neq 0$ then
both real and imaginary parts
of (\ref{eqzz}) with $y=(\xi,1)^\ttt$ represent lines or circles. In any cases, the solution of the equation
(\ref{eqzz}) in the sphere $\mathbb C\mathbb P^1$ is the intersection of two circles on the sphere. This tells us that
the possible numbers of the solutions in $\mathbb C\mathbb P^1$ are $0,1,2,\infty$, and shows the direction (i) $\Longrightarrow$ (ii).

For the direction (ii) $\Longrightarrow$ (iii), suppose that the solution of (\ref{eqzz}) is the circle on the sphere
$\mathbb C\mathbb P^1$, which is represented by the M\"obius transform
$$
\omega\mapsto \left(\dfrac{a\omega+b}{c\omega+d},\ 1\right)^\ttt
=\left( a\omega+b,c\omega+d\right)^\ttt\in\mathbb C\mathbb P^1,\qquad |\omega|=1,
$$
where $B=\left(\begin{matrix} a&b\\c&d\end{matrix}\right)$ is nonsingular. This means that
$A\bar B(\bar \omega, 1)^\ttt\parallel B(\omega,1)^\ttt$,
or equivalently, $B^{-1}A\bar B(\bar \omega, 1)^\ttt\parallel (\omega,1)^\ttt$
holds for each $\omega$ with $|\omega|=1$. From this, it is easy to see that
$B^{-1}A\bar B=J$, up to scalar multiplications.

The direction (iii) $\Longrightarrow$ (v) is a direct calculation. Indeed, we have
$$
A=BJ\bar B^{-1}=(\det \bar B)^{-1}\left(\begin{matrix}
-a\bar c+b\bar d & |a|^2-|b|^2\\
-|c|^2+|d|^2&\bar ac-\bar bd\end{matrix}\right).
$$
Put $\alpha=-a\bar c+b\bar d$, $r=|a|^2-|b|^2$ and $s=-|c|^2+|d|^2$. Then we see that $|\alpha|^2+rs=|ad-bc|^2>0$,
since $B$ is nonsingular.

For the direction (ii) $\Longrightarrow$ (iv), we represent the circle by the image of the M\"obius transform
$$
t\mapsto\left(\dfrac{at+b}{ct+d},\ 1\right)^\ttt
=\left( at+b,ct+d\right)^\ttt\in\mathbb C\mathbb P^1,\qquad t\in\mathbb R
$$
on the real line, with a nonsingular $C=\left(\begin{matrix}
a&b\\c&d\end{matrix}\right)$. The exactly same argument as above
shows that $C^{-1}A\bar C=I_2$, up to scalar multiplications. The
direction (iv) $\Longrightarrow$ (v) is also same as (iii)
$\Longrightarrow$ (v)

It remains to show the direction (v) $\Longrightarrow$ (i).
Suppose that $A$ is of the form in (v). Then we see that $A(\bar\xi,1)^\ttt\parallel (\xi,1)$ if and only if
$$
s|\xi|^2-\bar\alpha\xi-\alpha-\bar\xi-r=0.
$$
If $s\neq 0$, then this is the circle
$\left|\xi-\dfrac {\alpha}s\right|^2=\dfrac {|\alpha|^2+rs}{s^2}$. If $s=0$, then this the line $2{\text{\rm Re}}(\bar\alpha\xi)=s$.
Therefore, {\rm (\ref{eqzz})} has infinitely many solutions.

For the last claim, we note that three mutually distinct points $(\xi_i,1)^\ttt$ on the sphere, $i=1,2,3$, determine a circle.
We consider the vectors
$$
\Xi_i=(\xi_i,-1)\ot (\bar\xi_i,1)=(|\xi_i|^2,\xi_i,-\bar\xi_i,-1)\in\mathbb C^4.
$$
By Proposition 2.1 of \cite{ha+kye_unique_decom}, we see that $\{\Xi_1,\Xi_2,\Xi_3\}$ is linearly independent. Therefore,
the matrix $A$ is determined by (\ref{eqxxx}) up to scalar multiplications.
\end{proof}

If $r=s=0$ then we have
$$
e^{-i\theta}\left(\begin{matrix}e^{i\theta}&0\\0&-e^{-i\theta}\end{matrix}\right)
=\left(\begin{matrix}1&0\\0&-e^{-2i\theta}\end{matrix}\right).
$$
Therefore, any diagonal matrix satisfies the conditions of Lemma \ref{lemma} by (v),
whenever two diagonals have the same absolute values.

The relation between the matrices $B$ and $C$ in Lemma \ref{lemma} is now clear. We denote by $\phi_B$ the M\"obius transformation
given by $B$. Then we see that the following are equivalent:
\begin{enumerate}
\item[(i)]
The image of the unit circle under $\phi_B$ coincides with the image of the real axis under $\phi_C$.
\item[(ii)]
$\phi_{C^{-1}B}=\phi_C^{-1}\circ\phi_B$ sends the unit circle onto the real axis.
\item[(iii)]
$BJ\bar B^{-1}=C\bar C^{-1}$, or equivalently $B^{-1}C\, \overline{B^{-1}C}^{-1}=J$, up to scalar multiplications.
\end{enumerate}
As a byproduct, we see that $\phi_D$ sends the unit circle onto the real axis if and only if $D^{-1}\bar D=J$ up to
scalar multiplications.

\begin{theorem}
Let $(V,W)$ be a pair of $2\times 2$ nonsingular matrices which are not parallel to each other.
Then $G_{3,4}(V)\cap G_{4,3}(W)=G_{3,3}(V,W)$ if and only if $\bar W^{-1}V$ is of the form in Lemma \ref{lemma}.
If $(V_1, W_1)$ and $(V_2,W_2)$ are pairs satisfying these conditions, then $G_{3,3}(V_1,W_1)=G_{3,3}(V_2,W_2)$
if and only if $\bar W_1^{-1}V_1=\bar W_2^{-1}V_2$. Especially, we have
$G_{3,3}(V,W)=G_{3,3}(\bar W^{-1}V,I)$.
\end{theorem}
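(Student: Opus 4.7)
The plan is to identify the extreme points of $G_{3,4}(V)\cap G_{4,3}(W)$ with solutions of the equation $\bar W^{-1}V\bar y\parallel y$ in $\mathbb C\mathbb P^1$, and then apply Lemma~\ref{lemma}.

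As a set, $G_{3,4}(V)\cap G_{4,3}(W)=\tau(V^\perp,W^\perp)$, since combining the two range conditions yields exactly these subspaces. Its extreme points are the product states $P_{x\ot y}$ satisfying (\ref{ffff}); as noted just before Lemma~\ref{lemma}, that system is equivalent to $\bar W^{-1}V\bar y\parallel y$, which is (\ref{eqzz}) with $A=\bar W^{-1}V$. Lemma~\ref{lemma} then says this has $0$, $1$, $2$, or infinitely many solutions, the infinite case occurring precisely when $\bar W^{-1}V$ has the form described there. By the classification of faces of $\mathbb S_{2\times 2}$ in Section~\ref{sec:sep}, the only face types with finitely many extreme points are $G_{1,1}$ and $G_{2,2}$, while $G_{3,3}$ has infinitely many. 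Hence the intersection can equal $G_{3,3}(V,W)$ only under the stated condition. Conversely, when the condition holds I would pick three distinct points on the resulting circle of solutions and invoke the linear-independence step at the end of the proof of Lemma~\ref{lemma} (based on Proposition~2.1 of \cite{ha+kye_unique_decom}) to conclude that the associated product vectors $x\ot y$ span the three-dimensional subspace $V^\perp$. The analogous argument for the partial transposes $\bar x\ot y$ handles $W^\perp$, confirming that $(V^\perp,W^\perp)$ is the minimal pair representing the face, so the intersection is genuinely of type $G_{3,3}(V,W)$.

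For the second assertion, if $G_{3,3}(V_1,W_1)=G_{3,3}(V_2,W_2)$, then the two faces share the same extreme points, so the circles of $y$-solutions determined by $\bar W_1^{-1}V_1$ and $\bar W_2^{-1}V_2$ coincide in $\mathbb C\mathbb P^1$. The last clause of Lemma~\ref{lemma} then forces $\bar W_1^{-1}V_1=\bar W_2^{-1}V_2$ up to scalar, consistent with the fact that $V_i$ and $W_i$ are themselves only well-defined up to scalar. The converse is clear by reversing this argument, and the specialization $G_{3,3}(V,W)=G_{3,3}(\bar W^{-1}V,I)$ is immediate since the pair $(\bar W^{-1}V,I)$ produces the same invariant $\bar I^{-1}\bar W^{-1}V=\bar W^{-1}V$.

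The main technical hurdle is the scalar-ambiguity bookkeeping: since $V$ and $W$ are well-defined only up to scalar as vectors, the matrix $\bar W^{-1}V$ is only defined projectively, and the identity $\bar W_1^{-1}V_1=\bar W_2^{-1}V_2$ must be read in this projective sense; a secondary but essential point is verifying minimality of the pair $(V^\perp,W^\perp)$ via the linear-independence argument indicated above.
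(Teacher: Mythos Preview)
The paper states this theorem without proof, leaving it as a direct consequence of Lemma~\ref{lemma} and the face classification of Section~\ref{sec:sep}. Your argument for the first assertion is exactly the intended one: identify the extreme points with solutions of $A\bar y\parallel y$ for $A=\bar W^{-1}V$, apply Lemma~\ref{lemma}, and use the classification (together with the linear-independence check) to pin down the face type as $G_{3,3}$. For the forward implication of the second assertion your reasoning is also fine: equal faces have the same extreme points, hence the same circle of $y$-values, and the uniqueness clause of Lemma~\ref{lemma} forces $\bar W_1^{-1}V_1=\bar W_2^{-1}V_2$ projectively.

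There is, however, a genuine gap in your converse. Saying it is ``clear by reversing this argument'' is not enough: equality of $\bar W_i^{-1}V_i$ only guarantees that the two faces have the same circle of $y$-solutions, not the same extreme points. The $x$-component of an extreme point $P_{x\ot y}$ is determined by $x\perp V_i\bar y$, equivalently $x\perp \bar W_i y$, and this depends on $W_i$ individually, not only on $\bar W_i^{-1}V_i$. Concretely, take $A=J$ and compare $(V_1,W_1)=(J,I)$ with $(V_2,W_2)=(\bar W_2 J,W_2)$ for $W_2=\mathrm{diag}(1,2)$: both give $\bar W_i^{-1}V_i=J$, but for $y=(1,e^{i\theta})$ on the common circle one finds $x_1\perp(e^{-i\theta},1)$ while $x_2\perp(e^{-i\theta},2)$, so $x_1\nparallel x_2$ and the two faces have different extreme points. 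Since $(V_i^\perp,W_i^\perp)$ are the minimal pairs, the faces are literally different subsets of $\mathbb S_{2\times 2}$. What survives is an affine isomorphism $\varrho\mapsto(\bar W^{*}\ot I_2)\,\varrho\,(\bar W\ot I_2)$ in the spirit of the maps used in Section~\ref{sec:sep}; this sends $G_{3,3}(\bar W^{-1}V,I)$ onto $G_{3,3}(V,W)$ and is presumably what the equality sign is meant to convey. Your write-up should either supply this isomorphism explicitly or flag that the converse, read as set equality, does not follow from the circle argument alone.
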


\section{Moment curves arising from two qubit system}\label{sec:moment-curve}

In this section, we explain how the moment curve arises. To do this, we need to consider the convex cone $\mathbb S^\prime_{2\times 2}$
generated by the convex set $\mathbb S_{2\times 2}$. This is the convex hull of all positive semi-definite
rank one matrices onto a product vector which is not necessarily normalized. We see that
$$
\mathbb S_{2\times 2}=\{\varrho\in \mathbb S^\prime_{2\times 2}: \tr\varrho=1\}.
$$
Faces of the convex cone $\mathbb S^\prime_{2\times 2}$ correspond to faces of $\mathbb S_{2\times 2}$ in an obvious way. We denote
by $G^\prime_{k,\ell}(V,W)$ the face of $\mathbb S^\prime_{2\times 2}$ which corresponds to the face $G_{k,\ell}(V,W)$ of $\mathbb S_{2\times 2}$.

Now, we suppose that $G_{3,4}(V)\cap G_{4,3}(W)=G_{3,3}(V,W)$.
The extreme points in this convex set can be described in two ways, using the unit circle or the real axis,
as it was shown in the proof of Lemma \ref{lemma}.
We first consider the unit circle. We note that $P_{x\ot y}$ is an extreme point if and only if $y$ is of the form
$$
y_\omega=\left( a\omega+b,c\omega+d\right)^\ttt\in\mathbb C\mathbb P^1,
$$
for a complex number $\omega$ of modulus one.
From the relation $x_\omega\perp \bar W y_\omega$, we also see that entries of $x_\omega$ must be linear combinations of $\bar\omega$ and $1$.
Therefore, all entries of $x_\omega\ot y_\omega$ are linear combinations of $\omega, \bar\omega$ and $1$, and all
entries of the $4\times 4$ matrix representing the rank one positive semi-definite matrix
$$
R_{x_\omega\ot y_\omega}=(x_\omega\ot y_\omega)(x_\omega\ot y_\omega)^*\in G_{3,3}^\prime(V,W)
$$
are also linear combinations of
$\omega,\bar\omega,\omega^2,\bar\omega^2$ and $1$. Therefore, it is now clear that
$$
(\omega,\omega^2)\mapsto R_{x_\omega\ot y_\omega}
$$
extends to an affine isomorphism from the convex body $C^4\subset\mathbb R^4$ into $G^\prime_{3,3}(V,W)\subset\mathbb S_{2\times 2}$,
which is the convex cone generated by
the image of this isomorphism.
The map
$$
\omega\mapsto (\omega,\omega^2)
$$
is nothing but the trigonometric moment curve with $p=2$, as it was explained in Introduction.

Now, we consider the face $G_{3,3}(V,W)$ with the specific example
\begin{equation}\label{VW}
V=(1,0,0,-1)=
\left(\begin{matrix}1&0\\0&-1\end{matrix}\right),
\qquad
W=(0,1,-1,0)=
\left(\begin{matrix}0&1\\-1&0\end{matrix}\right),
\end{equation}
for which there are infinitely many solutions for (\ref{ffff}), since $\bar W^{-1}V=J$.
We note that every $4\times 4$ Hermitian matrix $\varrho$ with a kernel vector $V$ must be of the form
\begin{equation}\label{xxx}
\varrho
=\left(\begin{matrix}
a&\bar\delta&\alpha&a\\
\delta&b&\gamma&\delta\\
\bar\alpha&\bar\gamma&c&\bar\alpha\\
a&\bar\delta&\alpha&a
\end{matrix}\right),
\qquad {\text{\rm with}}\
\varrho^\Gamma
=\left(\begin{matrix}
a&\bar\delta&\bar\alpha&\bar\gamma\\
\delta&b&a&\bar\delta\\
\alpha&a&c&\bar\alpha\\
\gamma&\delta&\alpha&a
\end{matrix}\right).
\end{equation}
Since $W$ is a kernel vector of $\varrho^\Gamma$, we see that $a=b=c=\dfrac 14$ and $\delta=\alpha$. Therefore, $\varrho$
and $\varrho^\Gamma$ are of the form
\begin{equation}\label{4x4}
\varrho(\alpha,\gamma)
=\dfrac 14\left(\begin{matrix}
1&\bar\alpha&\alpha&1\\
\alpha&1&\gamma&\alpha\\
\bar\alpha&\bar\gamma&1&\bar\alpha\\
1&\bar\alpha&\alpha&1
\end{matrix}\right),
\qquad
\varrho(\alpha,\gamma)^\Gamma
=\dfrac 14\left(\begin{matrix}
1&\bar\alpha&\bar\alpha&\bar\gamma\\
\alpha&1&1&\bar\alpha\\
\alpha&1&1&\bar\alpha\\
\gamma&\alpha&\alpha&1
\end{matrix}\right).
\end{equation}
Considering the determinants of principal submatrices, we
finally see that both $\varrho$ and $\varrho^\Gamma$ are positive semi-definite if and only if
\begin{equation}\label{C^4}
|\alpha|\le 1,\qquad |\gamma|\le 1,\qquad
2|\alpha|^2+|\gamma|^2-\alpha^2\bar\gamma-\bar\alpha^2\gamma\le 1.
\end{equation}
Therefore, the face $G_{3,3}(V,W)$ is determined by $4\times 4$ matrices in (\ref{4x4}) whose entries satisfy
the three inequalities in (\ref{C^4}).

Now, we see that an extreme point $P_{x\ot y}$ belongs to $G_{3,3}(V,W)$ if and only if
$$
x=(1,\bar\omega),\qquad y=(1,\omega)
$$
for a complex number $\omega$ with $|\omega|=1$. These extreme points are realized by the following $4\times 4$ matrix
$$
P_{x\ot y}=
\dfrac 14\left(\begin{matrix}
1&\bar\omega&\omega&1\\
\omega&1&\omega^2&\omega\\
\bar\omega&\bar\omega^2&1&\bar\omega\\
1&\bar\omega&\omega&1
\end{matrix}\right)=\varrho(\omega,\omega^2),
$$
with $|\omega|=1$. We note that the map
\begin{equation}\label{iso_C^4}
(\alpha,\gamma)\mapsto
\varrho(\alpha,\gamma),
\end{equation}
defines an affine isomorphism from $C^4\subset\mathbb R^4$ onto $G_{3,3}(V,W)\subset\mathbb T_{2\times 2}$,
which sends extreme points $(\omega,\omega^2)$ of $C^4$
onto extreme points $P_{x_\omega\ot y_\omega}=\varrho(\omega,\omega^2)$ of $G_{3,3}(V,W)$.
Finally, we note that $(\alpha,\gamma)\in C^4$ if and only if
$\varrho(\alpha,\gamma)\in G_{3,3}(V,W)$ if and only if $(\alpha,\gamma)$ satisfies the relation (\ref{C^4}).
This gives us the description of the convex body $C^4$ by inequalities.

If the circle is represented by $y_t=(at+b,ct+d)^\ttt$ with $t\in\mathbb R$, then
all entries of $R_{x_t\ot y_t}$ are linear combinations of $1,t,t^2,t^3$ and $t^4$. Therefore, the map
$$
(t,t^2,t^3,t^4)\mapsto R_{x_t\ot y_t}
$$
extends to an affine isomorphism from the convex hull of the moment curve into a face of the form $G^\prime_{3,3}$, which is
the convex cone generated by the image. For a concrete example, we take $C=I$ in Lemma \ref{lemma}, and consider
the face $G^\prime_{3,3}(I,I)$. Extreme points are given by
$$
R_{(1,-t)\ot (t,1)}=R_{(t,1,-t^2,-t)}
=\left(\begin{matrix}
t^2& t& t^3 & -t^2\\
t & 1 & t^2 & -t\\
t^3& t& t^4 & -t^3\\
-t^2& -t& -t^3& t^2
\end{matrix}\right),
\qquad t\in\mathbb R.
$$
In this way, we have a concrete realization of the convex body generated by the moment curve.

Now, we pay attention to the face $G_{3,4}(V)$ itself, with $V$ given
by (\ref{VW}). We see that an extreme point $P_{x\ot y}$ belongs to the face $G_{3,4}(V)$ if and only if
$x\perp V\bar y$ if and only if $x\ot y$ is of the form $(x_1,x_2)\ot (x_2,x_1)$. Therefore, extreme points of
the face $G_{3,4}(V)$ is determined by $x=(x_1,x_2)\in\mathbb C^2$, and the corresponding rank one matrix
$\varrho_x=R_{(x_1,x_2)\ot (x_2,x_1)}$ is of the form
$$
\varrho_x=
\left(
\begin{matrix}
|x_1x_2|^2  & |x_1|^2\bar x_1 x_2 & |x_2|^2 x_1\bar x_2  & |x_1x_2|^2\\
|x_1|^2 x_1\bar x_2  & |x_1|^4   & x_1^1\bar x_2^2  &|x_1|^2 x_1\bar x_2\\
|x_2|^2 \bar x_1 x_2 &\bar x_1^1 x_2^2  &|x_2|^4 &  |x_2|^2 \bar x_1 x_2\\
|x_1x_2|^2 &|x_1|^2 \bar x_1 x_2 & |x_2|^2  x_1 \bar x_2  & |x_1x_2|^2
\end{matrix}
\right)
$$
Note that this is of the form in (\ref{xxx}) as it is expected. If $\|x\|=1$ then $\varrho_x$ is of trace one. Furthermore,
if two unit vectors $x$ and $z$ are parallel to each other then $\varrho_x=\varrho_z$. Therefore, the map
$$
x\mapsto \varrho_x
$$
gives rise to a homeomorphism from $\mathbb C\mathbb P^1$ into $G_{3,4}(V)$. If we write
$$
S(x)=(|x_1|^4,|x_2|^4, |x_1|^2x_1\bar x_2, |x_2|^2\bar x_1 x_2, x_1^2\bar x_2^2)\in\mathbb R^2\times\mathbb C^3=\mathbb R^8,
$$
then we see that $S(x)\leftrightarrow \varrho_x$ is an affine isomorphism. If we parameterize $\mathbb C\mathbb P^1$ by
the spherical coordinate $x= (\cos\phi,\sin\phi\cos\theta,\sin\phi\sin\theta)$
by $x_1=\cos\phi$ and $x_2=e^{-i\theta}\sin\phi$, then we have
the {\sl moment surface} given by
$$
\begin{aligned}
S(x)&=
(\cos^4\phi,\sin^4\phi,\cos^3\phi\sin\phi\cos\theta,\cos^3\phi\sin\phi\sin\theta,\\
&\phantom{X}
\cos\phi\sin^2\phi\cos\theta,-\cos\phi\sin^2\phi\sin\theta,
\cos^2\phi\sin^2\phi\cos 2\theta,\cos^2\phi\sin^2\phi\sin 2\theta)\in\mathbb R^8.
\end{aligned}
$$
The convex body $S^8$ in $\mathbb R^8$ generated by the image of $S$ is affinely isomorphic
to the maximal face $G_{3,4}(V)$.
A point in the convex body $S_8$ is an extreme point if and only if it is a point on the surface.
The convex combination of any two extreme points is an edge. The image of circles in $\mathbb C\mathbb P^1$ under the map $S$
generate maximal faces of $S_8$, which are of $4$-dimensional. There is no more nontrivial face of $S_8$.

\section{Moment curves arising from qubit-qudit system}\label{sec:moment-curve-2xp}

In this final section, we construct the trigonometric curve
(\ref{tri-moment}) from a face of the convex set $\mathbb S_{2\times
p}$ of all $2\otimes p$ separable states. To do this, we define
$2\times p$ matrices $V_i$ and $W_i$  by
\begin{equation}\label{vw}
V_i=e_{1,i}-e_{2,i+1}\in M_{2\times p},\qquad
W_i=e_{1,i+1}-e_{2,i}\in M_{2\times p},\qquad i=1,2,\dots,p-1.
\end{equation}
where $\{e_{i,j}\}$ denotes the standard matrix units of $M_{2\times p}$.
Suppose that $x\otimes y\in\mathbb C^2\ot\mathbb C^p$ belongs to the face $\tau(\{V_i\}^\perp, \{W_i\}^\perp)$
of $\mathbb T_{2\times p}$ then we have
\begin{equation}\label{bbb}
V_i^*x\perp \bar y,\qquad  x\perp \bar W_i y.
\end{equation}
Note that $V_i^*x=x_1e_i-x_2e_{i+1}\in\mathbb C^p$ for $i=1,2,\dots,p-1$, with the standard orthonormal basis
$\{e_i\}$ of $\mathbb C^p$. We see that $\{V_i^*x:i=1,\dots,p-1\}$ is linearly independent
for every nonzero $x\in\mathbb C^2$, and so $y$ is uniquely determined
by $x$, up to scalar multiplications. We actually have
$$
y= x_2^{p-1}e_1+ x_1x_2^{p-2}e_2+  \dots x_1^{p-2}x_2e_{p-1} + x_1^{p-1} e_p\in\mathbb C^p.
$$
Since $\bar W_i y=(x_1^i x_2^{p-(i+1)}, -x_1^{i-1}x_2^{p-i})^\ttt\in\mathbb C^2$,
we see that $x\perp \bar W_i y$ holds if and only if
$$
|x_1|^2x_1^{i-1}x_2^{p-(i+1)}
=
|x_2|^2x_1^{i-1}x_2^{p-(i+1)},
\qquad i=1,2,\dots,p-1.
$$
From this, we see that the solution of the equation (\ref{bbb}) is given by
$$
x_\omega=(1,\bar\omega)^\ttt\in\mathbb C\mathbb P^1,\qquad
y_\omega=(1,\omega,\omega^2,\dots\omega^{p-1})^\ttt\in\mathbb C\mathbb P^{p-1},\qquad |\omega|=1,
$$
and the corresponding pure product states are of the following form
$$
P_\omega:=\frac 1{2p}P_{x_\omega\ot y_\omega}
=\frac 1{2p}\left(\begin{matrix}A&B\\B^*&A\end{matrix}\right)\in \mathbb S_{2\times p}\subset M_2\otimes M_p,
$$
with
$$
A=\left(\begin{matrix}
1 & \bar\omega & \cdots & \bar\omega^{p-2}& \bar\omega^{p-1}\\
\omega &  &  & & \bar\omega^{p-2}\\
\vdots &  & \ddots &&\vdots\\
\omega^{p-2}& & & & \bar\omega\\
\omega^{p-1}&\omega^{p-2} &\cdots & \bar\omega& 1\end{matrix}\right),
\qquad
B=\left(\begin{matrix}
\omega & 1 & \cdots &\bar\omega^{p-3}&\bar\omega^{p-2}\\
\omega^2 & & &&\bar\omega^{p-3}\\
\vdots &  & \ddots &&\vdots\\
\omega^{p-1} &  & &&1\\
\omega^{p} &\omega^{p-1}  &\cdots &\omega^2&\omega\end{matrix}\right).
$$
Therefore, we see that
$$
(\omega,\omega^2,\dots,\omega^p)\longleftrightarrow P_{\omega}
$$
extends to an affine isomorphism from the convex set $C^{2p}$ onto the convex hull of pure product states $P_{\omega}$.
The homeomorphism
$$
\omega\mapsto (\omega,\omega^2,\dots,\omega^p)\in\mathbb C^p=\mathbb R^{2p},\qquad |\omega|=1
$$
is nothing but the trigonometric moment curve given by (\ref{tri-moment}).

\begin{theorem}\label{main-thzz}
The convex hull of pure product states $P_{\omega}$
coincides with the face $\tau(D,E)$ of the convex set $\mathbb
T_{2\times p}$.
\end{theorem}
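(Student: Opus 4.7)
The inclusion $\conv\{P_\omega:|\omega|=1\}\subset\tau(D,E)$ is immediate from the derivation preceding the theorem: each product vector $x_\omega\ot y_\omega$ is orthogonal to every $V_i$ and each $\bar x_\omega\ot y_\omega$ is orthogonal to every $W_i$, so $P_\omega\in\tau(D,E)$, and convexity of $\tau(D,E)$ closes this direction.

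For the reverse inclusion, the plan is to repeat the analysis carried out for $p=2$ in Section~\ref{sec:moment-curve}. I would first impose the kernel conditions $\varrho V_i=0$ and $\varrho^\Gamma W_i=0$ for $i=1,\dots,p-1$ on a general Hermitian $\varrho\in M_{2p}$ to determine the affine span of $\tau(D,E)$. The identity $\varrho V_i=0$ forces the $(1,i)$-column of $\varrho$ to coincide with the $(2,i+1)$-column, while the dual identities on $\varrho^\Gamma$ propagate analogous identifications across rows; combined with Hermiticity and the trace-one normalization, these collapse $\varrho$ to an affine function $\varrho(c_1,\dots,c_p)$ of $p$ complex parameters. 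A direct matching of entries then confirms $\varrho(\omega,\omega^2,\dots,\omega^p)=P_\omega$ for $|\omega|=1$, exactly as in (\ref{4x4}) for $p=2$.

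The core step is to show that the PPT conditions $\varrho(c)\ge 0$ and $\varrho(c)^\Gamma\ge 0$ are jointly equivalent to the positive semi-definiteness of the $(p+1)\times(p+1)$ Hermitian Toeplitz matrix $T(c)$ with $T_{ij}=c_{i-j}$, taking $c_0=1$ and $c_{-k}=\bar c_k$. Since many rows and columns of $\varrho(c)$ and $\varrho(c)^\Gamma$ are duplicates forced by the kernel identifications, after deleting redundancies the two positivity requirements should collapse to the same Toeplitz condition; as a sanity check, for $p=2$ the determinantal inequalities for $T(c)\ge 0$ reproduce exactly the system (\ref{C^4}). Once this equivalence is established, the classical Herglotz--Toeplitz theorem identifies $T(c)\ge 0$ with the existence of a Borel probability measure $\mu$ on the unit circle satisfying $c_k=\int\omega^k\,d\mu(\omega)$, equivalently with $(c_1,\dots,c_p)\in C^{2p}$. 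By Carath\'eodory's theorem $\mu$ may be taken to be supported on at most $p+1$ points, and integrating the identity $\varrho(\omega,\omega^2,\dots,\omega^p)=P_\omega$ against $\mu$ expresses the given $\varrho$ as a finite convex combination of $P_\omega$'s, proving the reverse inclusion.

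The main technical obstacle is precisely the reduction of the two a priori separate PPT conditions to the single Toeplitz positivity $T(c)\ge 0$. One must track carefully which rows and columns of the $2p\times 2p$ matrices $\varrho(c)$ and $\varrho(c)^\Gamma$ are duplicates of one another and verify that, after removing these, both matrices collapse to the same $(p+1)\times(p+1)$ Toeplitz block, so that no extra inequalities beyond $T(c)\ge 0$ survive. The $p=2$ calculation in (\ref{4x4}) and (\ref{C^4}) provides the template, but the bookkeeping becomes heavier for general $p$.
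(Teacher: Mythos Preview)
Your proposal is correct and takes a genuinely different route from the paper's proof.

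The paper argues indirectly: it first exhibits the interior point $\varrho_I=\frac{1}{p+1}\sum_{i=0}^p P_{\zeta^i}$ (with $\zeta$ a $(p{+}1)$-th root of unity) and checks that its range and the range of its partial transpose are exactly $D$ and $E$, so the two convex sets share an interior point. Assuming the inclusion $C\subsetneq\tau(D,E)$ were proper, it then pushes a line segment through a maximal face of $C$ into $\tau(D,E)\setminus C$ to produce a PPT state $\varrho_\mu$ on the boundary of $\tau(D,E)$ with $\rk\varrho_\mu=\rk\varrho_\mu^\Gamma=p$, invoking a rank result of Chen--\v{D}okovi\'c. It concludes by checking that such a $\varrho_\mu$ is genuinely supported on $\mathbb C^2\ot\mathbb C^p$, whence the $2\times N$ separability criterion of Kraus--Cirac--Karnas--Lewenstein forces $\varrho_\mu$ to be separable, a contradiction.

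Your approach is direct and more classical: parameterize the affine span of $\tau(D,E)$ by $p$ complex numbers, observe that after deleting duplicated rows and columns both $\varrho(c)\ge 0$ and $\varrho(c)^\Gamma\ge 0$ collapse to positivity of the same $(p{+}1)\times(p{+}1)$ Hermitian Toeplitz matrix $T(c)$, and then invoke the Carath\'eodory--Toeplitz solution of the trigonometric moment problem. The reduction you flag as the ``main technical obstacle'' does go through: the kernel identifications force $A$ and $C$ to coincide and to be Toeplitz, and the independent rows/columns of $\varrho(c)$ (indices $1,\dots,p,p{+}1$) and of $\varrho(c)^\Gamma$ (indices $1,p{+}1,\dots,2p$) yield, after a cyclic permutation in the first case, exactly the same matrix $T(c)$. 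Your argument is self-contained modulo a classical theorem, avoids the Chen--\v{D}okovi\'c and Kraus~et~al.\ results entirely, and as a bonus delivers directly the inequality description of $C^{2p}$ that the paper only alludes to after Theorem~\ref{main-2p}. The paper's argument, by contrast, is computation-free but leans on structural results specific to $2\times n$ PPT states.
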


\begin{proof}
We denote by $C$ the convex hull of $P_{\omega}$, then we see
that $C=\tau(D\cap E)\cap\mathbb S_{2\times p}$ is a subset of $\tau(D,E)$.
We also denote by $\{\zeta^i:i=0,1,\dots p\}$ the $(p+1)$-th root of unity.
For any $\omega$ with $|\omega|=1$, we have
$\sum_{i=0}^p(\omega\zeta^i)^k=\sum_{i=0}^p(\zeta^i)^k=0$ for $k=1,2,\dots,p$, and so it follows that
$$
\sum_{i=0}^pP_{\zeta^i}=\sum_{i=0}^p P_{\omega\zeta^i}
=P_\omega+\sum_{p=1}^p P_{\omega\zeta^i}.
$$
Since $P_\omega$ is an arbitrary extreme point of $C$, this means that
$$
\varrho_I:=\frac 1{p+1}\sum_{i=0}^pP_{\zeta^i}=\frac1{2p(p+1)}\left(\begin{matrix} I_p&J\\J^*& I_p\end{matrix}\right)\in M_2\ot M_p
$$
is an interior point of $C$, where $J=\sum_{i=1}^{p-1}e_{i,i+1}\in M_p$.
It is easy to check that the range spaces of $\varrho_I$ and $\varrho_I^\Gamma$ coincide with $D$ and $E$, respectively.
Therefore, $\varrho_I$ is a common interior point of both $C$ and $\tau(D,E)$, and so we have
$\inte C\subset \inte\tau(D,E)$.

Now, assume that $C$ is a proper subset of $\tau(D,E)$. Then there exists a maximal face $F$ of $C$
such that $\inte F\subset\inte\tau(D,E)$. Take $\varrho_1\in\inte F$ and an extreme point $\varrho_0$ of $C$ which
is not in $F$. Then, by \cite{kye-canad} Proposition 2.5, we see that
$$
\varrho_t:=(1-t)\varrho_0+ t \varrho_1
$$
is an interior point of $C$ for every $t$ with $0<t<1$, and so, it is an interior point of $\tau(D,E)$.
Since $\varrho_1\in\inte F$ is an interior point of $\tau(D,E)$ there exist $a>1$ such that $\varrho_a\in\tau(D,E)$,
which is an entangled state, because $\varrho_a\notin C$.
Let $\mu$ be the maximum so that $\varrho_\mu\in\tau(D,E)$.
If $1<a<\mu$ then $\varrho_a$ is a interior point of $\tau(D,E)$, and so $\rk\varrho_a=\rk\varrho_a^\Gamma=p+1$.
Because $\varrho_a$ is the convex combination of $\varrho_0$ and $\varrho_\mu$, and $\varrho_0$ is an extreme point,
we conclude that $\rk\varrho_\mu=\rk\varrho_\mu^\Gamma=p$ by Theorem 3 of \cite{chen_dj_2xd}.

Finally, it is easy to see that if $\varrho\in\tau(D,E)$ is supported on $\mathbb C^2\ot \mathbb C^{p-1}$
in the sense of \cite{2xn} then
$\rk\varrho\le p-1$ and $\rk\varrho^\Gamma\le p-1$. Therefore, we see that $\varrho_\mu$ must be supported on $\mathbb C^2\ot\mathbb C^p$,
and so it is separable by \cite{2xn}. This contradiction completes the proof.
\end{proof}

Now, the convex hull $C^{2p}$ of the trigonometric moment curve is affinely isomorphic to the convex hull
of the image of the curve
$$
\omega\mapsto P_{\omega}\in\mathbb S_{2\times p},\qquad |\omega|=1.
$$
Therefore, we identity the convex set $C^{2p}$ and the face $\tau(D,E)$ of $\mathbb T_{2\times p}$
by Theorem \ref{main-thzz}. We look for proper faces of
the convex set $C^{2p}$, or equivalently the convex set $\tau(D,E)$.
It must be of the form
$\tau(D_1,E_1)$ for subspaces $D_1$ and $E_1$ of $D$ and $E$,
respectively, at least one of which is proper. Consider a matrix in $D\ominus D_1$ or $E\ominus E_1$,
then we see that if an extreme point $P_{\omega}$ belongs
to $\tau(D_1,E_1)$ then $\omega$ satisfies an equation given by the matrix,
which turns out to be a polynomial of degree $p$. This means that
the cardinality of extreme points of any proper face of $C^{2p}$ is
less than or equal to $p$. Therefore, we conclude that any boundary point of $C^{2p}$ is the convex combination
of $k$ extreme points with $k\le p$.
For the converse, if we take extreme points whose cardinality is less than $p+1$, then
it is clear that the convex hull $\varrho$ of them in $\tau(D,E)$ has the rank less than $p+1$. This means that the range space
of $\varrho$ is a proper subspace of $D$, and so we see that $\varrho$ is a boundary point of $C^{2p}$.
We conclude that
$$
(\omega_1,\dots,\omega_p)\mapsto \conv \{P_{\omega_1},\dots, P_{\omega_p}\}
$$
is a one-to-one correspondence from the $p$-dimensional torus onto the lattice of all nontrivial faces of $C^{2p}$,
where the lattice structure on the $p$-dimensional torus is given by the set inclusion of the entries of
the ordered sets $(\omega_1,\dots,\omega_p)$ representing points of the $p$-dimensional torus.
We also see that every face is exposed by \cite{ha_kye_04}.

If we take distinct
$\omega_1,\dots,\omega_k$ with $k\le p$, then we see that
$\{x_{\omega_i}\}$ is mutually distinct and $\{y_{\omega_i}\}$ is
linearly independent. By the result in \cite{alfsen,kirk}, we
conclude that the convex hull of $\{P_{\omega_i}:i=1,2,\dots,k\}$
is a simplex. Therefore, we see that any proper face of the convex set $C^{2p}$
is a simplex with $k$ extreme points, where $k=1,2,\dots, p$.
If $\varrho$ is an interior point of $\tau(D,E)$ then $\rk\varrho=\rk\varrho^\Gamma=p+1$,
and so we apply Theorem 3 of \cite{chen_dj_2xd} to conclude that $\varrho$ is the convex combination
of $p+1$ extreme points.

If we take any $k$ distinct complex numbers $\omega_i$ of modulus one with $k\le 2p+1$, then the corresponding
extreme points $P_{\omega_i}$ are linearly independent by Proposition 2.2 of \cite{ha+kye_unique_decom}.
Therefore, their convex hull must be a simplex. The number $2p+1$ is the maximum number of extreme points of $C^{2p}$
whose convex hull is a simplex, because the affine dimension of $C^{2p}$ is $2p$.
We summarize as follows:

\begin{theorem}\label{main-2p}
The convex hull $C^{2p}$ of the trigonometric moment curve has the following properties for any $p=1,2,\dots$.
\begin{enumerate}
\item[(i)]
A point of $C^{2p}$ is an extreme point if and only if it is on the trigonometric moment curve.
\item[(ii)]
Any point of $C^{2p}$ is the convex combination of $k$ extreme points with $k\le p+1$.
\item[(iii)]
If we take $k$ distinct extreme points with $k\le 2p+1$, then their convex combination is a simplex.
It is a face of $C^{2p}$ if and only if $k\le p$. Every face is exposed, and there are no more nontrivial faces.
\end{enumerate}
\end{theorem}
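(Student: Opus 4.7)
The plan is to identify $C^{2p}$ with the face $\tau(D,E)$ of $\mathbb{T}_{2\times p}$ via Theorem~\ref{main-thzz}, and then read off (i)--(iii) from the structural analysis already assembled immediately before the theorem statement. The key ingredients I will re-use are: the parametrization of product vectors in $D\cap \bar E$ by $\omega$ on the unit circle through (\ref{bbb}), the explicit interior point $\varrho_I$ with $\rk\varrho_I=\rk\varrho_I^\Gamma=p+1$ constructed in the proof of Theorem~\ref{main-thzz}, and the linear independence of distinct $P_{\omega_i}$ supplied by Proposition~2.2 of \cite{ha+kye_unique_decom}.

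For (i), each $P_\omega$ is a rank-one pure product state, hence extreme in $\mathbb S_{2\times p}$ and a fortiori in $C^{2p}$. Conversely, any extreme point of $C^{2p}$ must be a pure product state $P_{x\otimes y}$ whose associated product vector lies in $D$ and whose partial conjugate lies in $E$; the system (\ref{bbb}) then pins $(x,y)$ down to a scalar multiple of $(x_\omega,y_\omega)$ for some $\omega$ with $|\omega|=1$. For (ii), interior points of $\tau(D,E)$ have rank pair $(p+1,p+1)$, so Theorem~3 of \cite{chen_dj_2xd} writes them as convex combinations of $p+1$ pure product states in the face, which by (i) must be among the $P_\omega$. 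Boundary points lie in proper faces $\tau(D_1,E_1)$, and the polynomial argument indicated before the theorem shows that each such face contains at most $p$ of the $P_\omega$.

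For (iii), Proposition~2.2 of \cite{ha+kye_unique_decom} guarantees that any $k\le 2p+1$ distinct $P_{\omega_i}$ are linearly independent and hence span a $(k-1)$-simplex. When $k\le p$, the bijection between $k$-tuples $(\omega_1,\dots,\omega_k)$ and proper faces of $C^{2p}$ exhibited just before the theorem identifies this simplex as a face; when $p+1\le k\le 2p+1$ the affine dimension is at least $p+1$, so a generic point of the simplex has rank pair $(p+1,p+1)$ and is therefore interior to $\tau(D,E)$, ruling out facehood. Exposedness of every face follows directly from \cite{ha_kye_04}, where every face of $\mathbb T_{2\times p}$ is shown to be exposed. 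The main delicacy I anticipate is verifying that a nonzero matrix in $D\ominus D_1$ really imposes a polynomial equation of degree at most $p$ on $\omega$: one needs to write the pairing with $P_\omega$ out explicitly and use $\bar\omega=\omega^{-1}$ to fold the Laurent monomials appearing in $P_\omega$ into a genuine polynomial in $\omega$.
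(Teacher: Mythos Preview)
Your proposal follows essentially the same route as the paper: the identification with $\tau(D,E)$ via Theorem~\ref{main-thzz}, the polynomial bound on extreme points of proper faces, Theorem~3 of \cite{chen_dj_2xd} for interior points, Proposition~2.2 of \cite{ha+kye_unique_decom} for linear independence up to $2p+1$ points, and exposedness via \cite{ha_kye_04} are exactly the ingredients the paper assembles in the discussion preceding the theorem.

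One quibble concerns your argument that $k>p$ points cannot span a face. The affine dimension of a simplex on $k$ linearly independent points is $k-1$ (hence $p$, not $p+1$, when $k=p+1$), and in any case affine dimension of the $P_{\omega_i}$ does not by itself force rank pair $(p+1,p+1)$ on an interior point of the simplex: for that you need the separate Vandermonde observation that any $p+1$ of the range vectors $x_{\omega_i}\otimes y_{\omega_i}$ already span $D$. This detour is unnecessary, however. The polynomial bound you invoke in (ii) already shows that every proper face of $\tau(D,E)$ contains at most $p$ of the $P_\omega$, so a simplex on $k>p$ distinct $P_{\omega_i}$ can be neither a proper face nor all of $C^{2p}$; this is precisely how the paper dispatches that direction.
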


Now, we can express elements of $C^{2p}$ with $2p\times 2p$ matrices as in (\ref{xxx}) for $p=2$. These matrices
have $p$ complex variables.
If we consider the determinants of principal submatrices, then we get the finite set of inequalities as in (\ref{C^4}) which determines
elements of $C^{2p}$.


We also note that
the convex combination $\varrho$ of $\ell$ distinct extreme points with $\ell\ge p+1$ and nonzero coefficients
is an interior point by Theorem \ref{main-2p}.
If we take an arbitrary extreme point $P_\omega$
of the convex set $S^{2p}$, then the line segment from $P_\omega$
to $\varrho$ can be extended until it meets a boundary point which should be expressed uniquely as the convex combination
of $k$ extreme points with $k\le p$.  Therefore, the interior point $\varrho$ can be expressed uniquely as the convex combination
of $P_\omega$ and other $k$ extreme points with $k\le p$. The situation is visible
when $p=1$, with the picture of a circle on the plane.

Finally, we remark that it is also possible to construct trigonometric moment curve in the convex set $\mathbb S_{m\times n}$. To do this, we
begin with
$$
x_\omega=(1,\bar\omega,\bar\omega^2,\dots,\bar\omega^{m-1})^\ttt\in\mathbb C\mathbb P^{m-1},\qquad
y_\omega=(1,\omega,\omega^2,\dots,\omega^{n-1})^\ttt\in\mathbb C\mathbb P^{n-1}
$$
for a complex number $\omega$ of modulus one,
and consider the projection $P_\omega:=P_{x_\omega\ot y_\omega}$ onto the product vector $x_\omega\ot y_\omega$.
Then we see that the correspondence
$$
(\omega,\omega^2,\dots,\omega^{m+n-2})\longleftrightarrow P_\omega
$$
gives rise to an affine isomorphisms from the convex hull $C^{2(m+n-2)}$ of the trigonometric curve
onto the convex hull of $\{P_{\omega} :|\omega|=1\}$. For any interior point $\varrho$ of this convex hull, we have
$\rk\varrho=\rk\varrho^\Gamma=m+n-1$, as for the case of $C^{2p}$ arising from $\mathbb S_{2\times p}$.

\end{document}